\newtheorem{theorem}{Theorem}
\newtheorem{lemma}[theorem]{Lemma}
\newtheorem{definition}[theorem]{Definition}
\begin{document}
\title{Canonical superdiffusion and energy fluctuation divergence}%
\author{Ken-ichi Okubo}%
\email{okubo.kenichi.65z@st.kyoto-u.ac.jp}
\author{Ken Umeno} %
\affiliation{Department of Applied Mathematics and Physics, 
Graduate School of Informatics, Kyoto University} %
\date{May 2016}%
\begin{abstract}
We propose a noble physical model obtained from a Hamiltonian with periodic potential.
This model is canonical, reversible and brings about chaotic superdiffusion with energy fluctuation divergence.
The analytical formula of invariant density can be obtained in some parameter range. 
In the range it is proved that the map is Anosov diffeomorphism and the invariant measure is a SRB measure.
We calculate the analytical formula of Lyapunov exponent.
\pacs{45.05.+x, 46.40.Ff, 96.12.De} 
\end{abstract}
\maketitle

\paragraph{Introduction}

Sinai-Ruelle-Bowen (SRB) measure which is a special case of Gibbs measure \cite{Young}
plays an important role in dynamical system and statistical dynamics points of view.
In the case of H\'enon map, Logistic map or Baker's map, it is proved that there is a SRB measure  
\cite{Benedicks,Jakobson,Tasaki} although in former two explicit form do not given and Baker's map does not have
time‐reversal symmetry.

Superdiffusion is also an important phenomenon in dynamical system and statistical dynamics.

In classical systems, there is no symplectic map in which it is proved analytically that superdiffusion occurs with measure unity. 
For example, in Standard map \cite{Venegeroles-Super}, it is said that superdiffusion occurs in some part of domain not in the whole domain.

In this paper, (i) new classical model whose SRB measure is given in explicit form in some parameter range
is proposed.
(ii) On the way to prove the existence of SRB measure, it is also proved that the map is
Anosov diffeomorphism and mixing in the parameter range. 
(iii) It is proved that energy fluctuation diverges because action variables are in accordance with Cauchy distribution. 
(iv) Lyapunov exponent is obtained using these properties.

At first Hamiltonian is introduced by
\begin{eqnarray}
& &H(I_1, I_2,\theta_1,\theta_2) \nonumber\\
&=& \frac{1}{2}\left(I_1^2 + I_2^2\right) -\varepsilon \log \left|\cos(\pi(\theta_1-\theta_2))\right|.
\end{eqnarray}

Let an interval $I$ be as $I = \left[-\frac{1}{2}, \frac{1}{2}\right)$
and the map $T$ is obtained by the Hamiltonian by leap frog method
\begin{eqnarray}
T &:& I^2 \times \mathbb{R}^2 \to  I^2 \times \mathbb{R}^2,\\
T\left(
\begin{array}{l}
I_1\\
\theta_1\\
I_2\\
\theta_2
\end{array}
\right)&=&
\left(
\begin{array}{l}
\displaystyle I_1 -\varepsilon \tan(\pi(I_1+\theta_1- I_2 -\theta_2)) \\
\displaystyle I_1+\theta_1  \mod{[-1/2, 1/2)}\\
\displaystyle I_2 + \varepsilon \tan(\pi(I_1+\theta_1-I_2-\theta_2))\\
\displaystyle  I_2+\theta_2  \pmod{[-1/2, 1/2)}
\end{array}
\right). \label{Symplectic map using Tan}
\end{eqnarray}

A two dimensional symplectic map with tangent function is also researched in \cite{Venegeroles}.
Figures  \ref{Fig: potential+} and \ref{Fig: potential-} show the behavior of the potential $V(\theta_1, \theta_2)$.
The absolute value of $V(\theta_1, \theta_2)$ diverges in $\{(\theta_1, \theta_2)| \cos(\pi(\theta_1-\theta_2))=0\}$.
\begin{figure}[h]
	\centering

	\includegraphics[width=8cm]{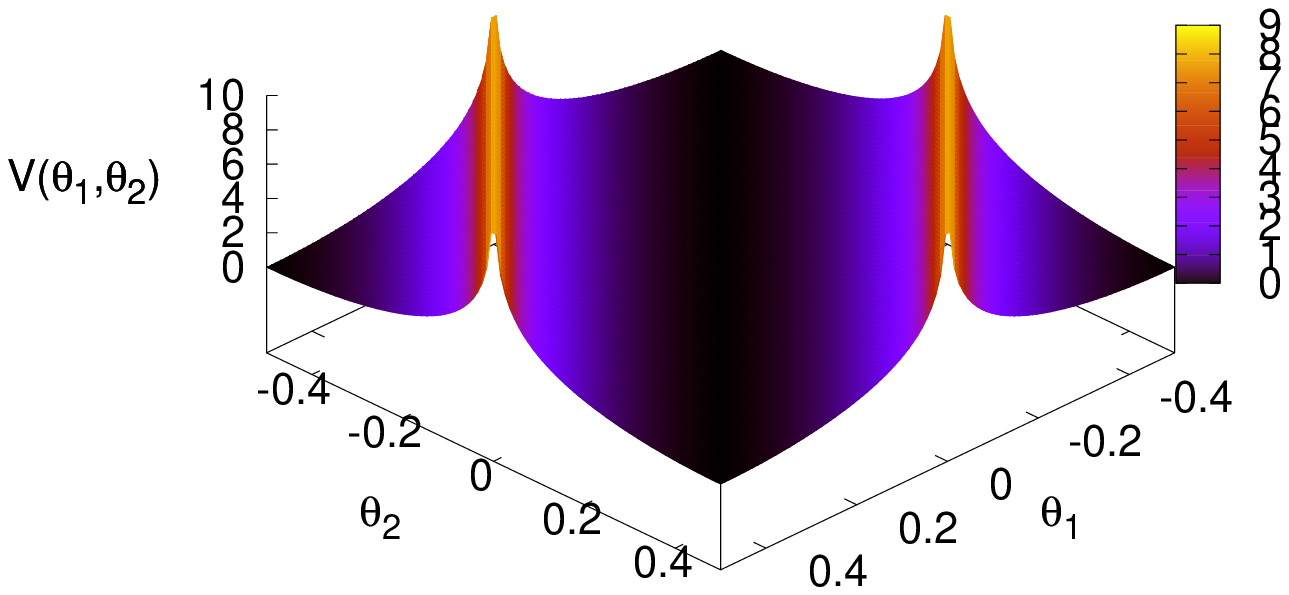}

	\caption{The shape of potential $V(\theta_1, \theta_2) = -\varepsilon\log|\cos(\pi(\theta_1-\theta_2))|$ for $\varepsilon=2.0$.
		The potential diverges in $\{(\theta_1, \theta_2)| \cos(\pi(\theta_1-\theta_2))=0\}$.}
	\label{Fig: potential+}

	\includegraphics[width=8cm]{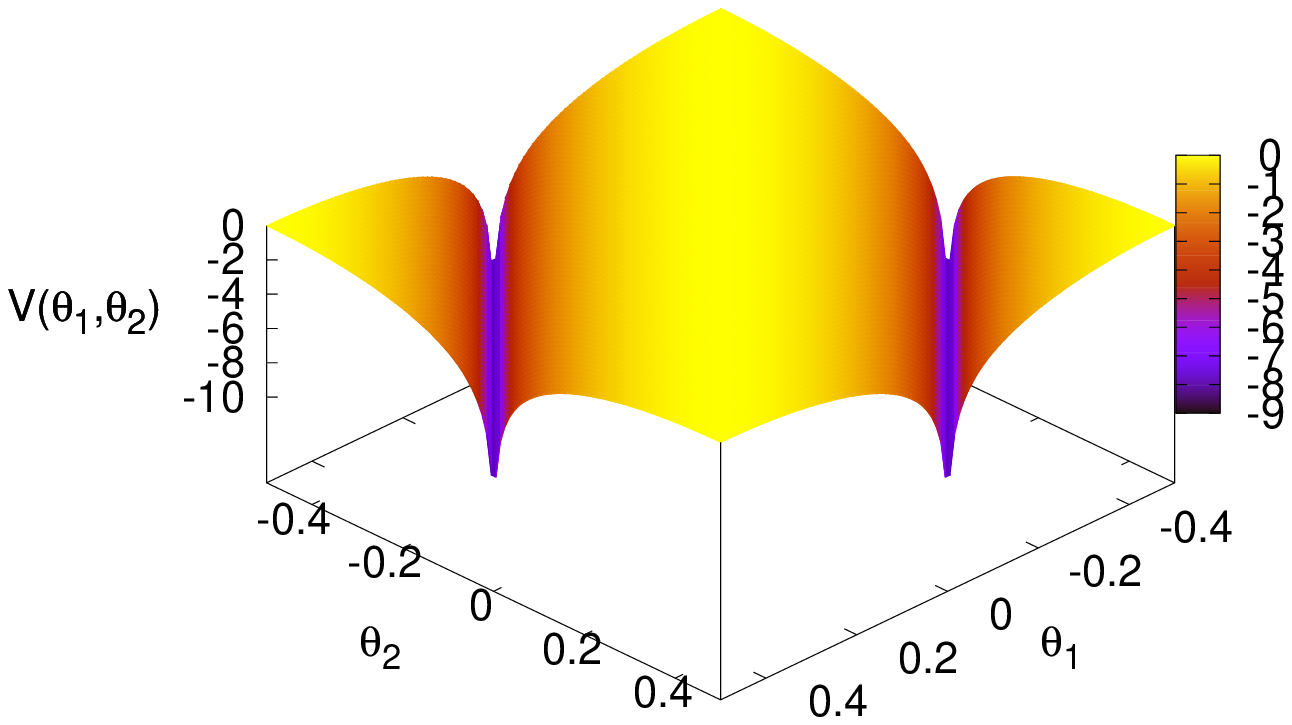}

	\caption{The shape of potential $V(\theta_1, \theta_2) = -\varepsilon\log|\cos(\pi(\theta_1-\theta_2))|$ for $\varepsilon=-2.0$.}
	\label{Fig: potential-}
	
\end{figure}
This map has a conserved quantity for momentum $C$ such that
\begin{eqnarray*}
C &\equiv&  I_1(0) + I_2(0)= \cdots
= I_1(n) + I_2(n) = \cdots.
\end{eqnarray*}
We can reduce dgree of freedom by using $C$ as
\begin{eqnarray}
I_2(n) &=& C - I_1(n),\label{I2}\\
\theta_1(n) + \theta_2(n) &=&  \theta_1(0) + \theta_2(0) + nC,\nonumber 
\end{eqnarray}
Then, we get such formulus as Equations (\ref{Formula1}) and (\ref{Formula2}).
\begin{figure*}
\begin{eqnarray}
\theta_1(n+1) - 2\theta_1(n) + \theta_1(n-1)  &=& -\varepsilon\tan\left( \pi(\theta_1(n) -\theta_2(n))\right), \label{Formula1}\\
\theta_2(n+1) - 2\theta_2(n) + \theta_2(n-1)  &=& \varepsilon\tan\left( \pi(\theta_1(n) -\theta_2(n))\right). \label{Formula2}
\end{eqnarray}
\end{figure*}
By subtracting Eq. (\ref{Formula2}) from Eq. (\ref{Formula1}), we obatain Eq. (\ref{Formula3})
\begin{figure*}
	\begin{eqnarray}
	\left[ \theta_1(n+1) -\theta_2(n+1)\right] - 2\left[ \theta_1(n)-\theta_2(n)\right] + \left[ \theta_1(n-1)-\theta_2(n-1)\right]  = -2\varepsilon\tan\left[ \pi(\theta_1(n) -\theta_2(n))\right], \label{Formula3}
	\end{eqnarray}
\end{figure*}
Now by changing variables as
\begin{eqnarray}
\begin{array}{cccc}
p_n &\equiv& \theta_1(n-1)-\theta_2(n-1) &\mod{[-1/2, 1/2)},\\
q_n &\equiv& \theta_1(n) -\theta_2(n) &\mod{[-1/2, 1/2)},
\end{array}
\end{eqnarray}
we get another equation which is topologically conjugate with  Eq. (\ref{Formula1}) as 
\begin{eqnarray}
\tilde{T} : x_n \longmapsto \tilde{T}x_n, x_n \in I \times I,
\end{eqnarray}
\begin{eqnarray}
	\begin{split}
&\left( 
\begin{array}{c}
p_{n+1}\\
q_{n+1}
\end{array}
\right) =
\tilde{T}\left( 
\begin{array}{c}
p_n\\
q_n
\end{array}
\right) \\=
&\left( 
\begin{array}{c}
q_n \mod{[-1/2, 1/2)}\\
2q_n -p_n -2 \varepsilon \tan( \pi q_n) \mod{[-1/2, 1/2)}
\end{array}\right) 
\end{split} \label{Transform formula}
\end{eqnarray}
The Jacobian  of Eq. (\ref{Transform formula}) is given as
\begin{eqnarray}
J(n) = \left(
\begin{array}{cc}
0 & 1\\
-1 & 2- \frac{2\pi \varepsilon}{\cos^2(\pi q_n)}
\end{array}
\right) \label{Jacobi}
\end{eqnarray}
The local instability condition (for at least one eigenvalue of the matrix \ref{Jacobi},
its absolute value is larger than unity)
 for (\ref{Jacobi}) is as
\begin{eqnarray}
\varepsilon<0, \frac{2}{\pi}< \varepsilon. \label{Condition}
\end{eqnarray}

\begin{theorem}
	The uniform distribution 
	\begin{eqnarray}
	\rho(p, q) = 1 \label{uniform distribution}
	\end{eqnarray}
	is an invariant density of the map $T_\varepsilon$ on the manifold $I\times I$.
\end{theorem}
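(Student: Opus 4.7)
The strategy is to apply the Perron--Frobenius (transfer) operator identity: a density $\rho$ is invariant under $\tilde T$ iff
\begin{equation*}
\rho(\mathbf{y}) \;=\; \sum_{\mathbf{x}\in \tilde T^{-1}(\mathbf{y})} \frac{\rho(\mathbf{x})}{|\det J(\mathbf{x})|} .
\end{equation*}
For $\rho\equiv 1$ this reduces to showing $\sum_{\mathbf{x}\in \tilde T^{-1}(\mathbf{y})} 1/|\det J(\mathbf{x})| = 1$. Two ingredients are therefore needed: (i) the value of $|\det J|$ on each smooth branch, and (ii) the cardinality of $\tilde T^{-1}(\mathbf{y})$.

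\textbf{Step 1 (Jacobian).} From the explicit matrix (\ref{Jacobi}),
\begin{equation*}
\det J(n) \;=\; 0\cdot\!\left(2-\tfrac{2\pi\varepsilon}{\cos^2(\pi q_n)}\right)-1\cdot(-1)\;=\;1,
\end{equation*}
so $|\det J|=1$ everywhere on $I\times I$ except on the measure-zero set $\{q=\pm 1/2\}$ where $\tan(\pi q)$ is undefined. This reflects the fact that $\tilde T$ arose from a symplectic (leap-frog) integrator applied to a Hamiltonian flow, so it preserves the two-form $dp\wedge dq$ on each smooth branch.

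\textbf{Step 2 (Bijectivity mod a null set).} I would show $\tilde T$ is a bijection of $I\times I$ onto itself, up to the singular set. For injectivity, if $\tilde T(p,q)=\tilde T(p',q')$ then the first component yields $q=q'$ (both already in $[-1/2,1/2)$), and the second component then forces $p\equiv p'\pmod 1$, hence $p=p'$. For surjectivity, given a target $(P,Q)\in I\times I$ set $q:=P$ and observe that as $p$ ranges over $[-1/2,1/2)$ the expression $2q - p - 2\varepsilon\tan(\pi q)$ sweeps out an interval of length one; after the modular reduction into $[-1/2,1/2)$ it covers $Q$ exactly once, giving a unique preimage $(p,q)$. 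Thus $\#\tilde T^{-1}(\mathbf{y})=1$ for a.e.\ $\mathbf{y}$.

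\textbf{Step 3 (Conclusion).} Combining Steps 1 and 2, the Perron--Frobenius sum contains a single term equal to $1/1=1$, which matches the uniform density on the left-hand side. Hence $\rho\equiv 1$ is invariant under $\tilde T$. Since $\tilde T$ is topologically conjugate to the reduction of $T_\varepsilon$, the statement for $T_\varepsilon$ follows.

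\textbf{Main obstacle.} The subtle step is the bookkeeping for the $\bmod[-1/2,1/2)$ identification in the second coordinate: as $p$ varies, the argument of the ``mod'' traverses several unit translates, so one must partition $I\times I$ into the smooth inverse branches of $\tilde T$ and verify that they tile the target exactly once. Once this is done carefully, the bijectivity together with $|\det J|\equiv 1$ gives the invariance essentially for free; no integration against test functions or Fourier analysis is needed.
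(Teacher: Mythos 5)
Your proposal is correct and follows essentially the same route as the paper: both verify the Perron--Frobenius identity for $\rho\equiv 1$ by showing that each point has exactly one preimage under $\tilde T_\varepsilon$ (fixing $q$ from the first coordinate, then solving uniquely for $p$ modulo the interval). You are in fact slightly more careful than the paper in explicitly recording $|\det J|=1$, which the paper's delta-function formulation leaves implicit.
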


\begin{proof}
	We prove the proposition by showing
	the uniform distribution $\rho$ is a solution of 
	 Perron-Frobenius Equation 
	\begin{eqnarray}
	\begin{split}
	f(x, y) &= \int \int_{-1/2}^{1/2} f(p, q) 
	\delta\left(
	X - T_\varepsilon(Y)
	\right)
	dpdq,\\
	\mbox{where}~~  X &= (x, y),\\
	 Y &= (p, q).
	\end{split}	
	\end{eqnarray}
	
	When a uniform distribution 
	satisfies Perron-Frobenius Equation, uniform distribution (\ref{uniform distribution}) is an invariant density for 
	$(I\times I, \tilde{T}_\varepsilon)$. Because the values of $\rho(x, y)$ and $\rho(p, q)$ is equivalent,
	we show for any point $X=(x, y)$ there is only one point $(p', q')$ which satisfies $X- T_\varepsilon(p', q') = \mathbf{0}$.

	First, $X$ is fixed. $x$ is on the $I = [-1/2, 1/2)$. Then we can determine only one $q \in I$ which satisfies
	$x-q=0$. Then $x, y$ and $q$ are fixed. Then there is only one $p \in I$ to satisfies 
	\begin{eqnarray}
	y - \left[2q -p -2 \varepsilon\tan(\pi q)  \mod{[-1/2, 1/2)}\right]=0.
	\end{eqnarray}
	Therefore uniform distribution is a solution of the Perron-Frobenius Equation.
\end{proof}

When $p$ and $q$ distribute uniformly, $\varepsilon \tan(\pi p)$ or $\varepsilon \tan(\pi q)$
are in accordance with the Cauchy distribution as
\begin{eqnarray}
f(x) = \frac{1}{\pi} \frac{|\varepsilon|}{x^2 + |\varepsilon|^2}.
\end{eqnarray}

\begin{theorem}
	The probability variables $p$ and $q$ are independent.
\end{theorem}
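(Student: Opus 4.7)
The plan is to read off independence directly from the form of the invariant density established in Theorem~1. On the product domain $I\times I$ with Lebesgue reference measure, the joint stationary density is $\rho(p,q)\equiv 1$, so a product structure is already built in; one only needs to identify the correct marginals and check that they multiply back to $\rho$.

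Concretely, I would first compute the two marginal densities by partial integration:
\[
\rho_p(p) = \int_{-1/2}^{1/2} \rho(p,q)\,dq = 1, \qquad \rho_q(q) = \int_{-1/2}^{1/2} \rho(p,q)\,dp = 1,
\]
both supported on $I=[-1/2,1/2)$. Each marginal is therefore the uniform distribution on $I$, which is consistent with the remark preceding the theorem that $\varepsilon\tan(\pi p)$ and $\varepsilon\tan(\pi q)$ are individually Cauchy-distributed with scale $|\varepsilon|$.

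Next I would verify the factorization criterion for independence of absolutely continuous random variables, namely $\rho(p,q)=\rho_p(p)\,\rho_q(q)$ for (Lebesgue-almost) all $(p,q)\in I\times I$. In our case this reduces to the trivial identity $1 = 1\cdot 1$ on the product square, so the check is immediate and there is no genuine analytic obstacle: the entire content of the theorem is that the invariant measure of Theorem~1 happens to be the product Lebesgue measure on the product domain, and Fubini separates it cleanly.

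If there is a subtle step at all, it is making sure we are using the correct reference measure and domain, i.e.\ that the mod operation in the definition of $p$ and $q$ places them in the same interval $I$ on which Theorem~1 was stated. Once that bookkeeping is done, independence is established, and it can then be invoked to conclude that $(\varepsilon\tan(\pi p),\varepsilon\tan(\pi q))$ is a pair of independent Cauchy variables, which is what the subsequent discussion of energy fluctuation divergence requires.
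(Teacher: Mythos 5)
Your proposal is correct and takes essentially the same route as the paper: both arguments simply read independence off the product structure of the uniform invariant density from Theorem~1, yours via the factorization criterion $\rho(p,q)=\rho_p(p)\rho_q(q)$ and the paper's via the equivalent test-function identity $\mathbb{E}[A(p)B(q)]=\mathbb{E}[A(p)]\,\mathbb{E}[B(q)]$ obtained by Fubini. If anything, your bookkeeping is cleaner, since the paper's normalization $\frac{1}{4}\,dp\,dq$ with marginals $\frac{1}{2}\,dp$ is inconsistent with the unit-length interval $I=[-1/2,1/2)$, on which the correctly normalized density is $\rho\equiv 1$ as you use.
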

\begin{proof}
	For any $L^1$ class function $A$ and $B$,
	\begin{eqnarray*}
	\mathbb{E}[A(p)B(q)] &=& \int_p\int_q A(p)B(q) \left(\frac{1}{4}dpdq\right),\\
	&=& \int_p A(p) \left(\frac{1}{2}dp\right) \int_q B(q) \left(\frac{1}{2}dq\right),\\
	&=& \mathbb{E}[A(p)] \mathbb{E}[B(q)].
	\end{eqnarray*}
Then, $p$ and $q$ are independent.
\end{proof}

\paragraph{Mixing property}

Let consider a set $A$ defined by
\begin{eqnarray*}
A = \left\lbrace (p, q) | {}^\exists n \in \mathbb{Z}~ \mbox{s.t.}~  q' = \pm 1/2, (p', q') = \tilde{T}_\varepsilon^n(p, q)\right\rbrace.
\end{eqnarray*}
Then, define manifold $M = (I\times I) \backslash A$.

\begin{definition}
	A diffeomorphism $f : M \to M$ where $M$ is a closed manifold is Anosov diffeomorphism when there exists a direct
	sum decomposition of the tangent bundle $T_xM$ at each point $x$ into complementary subspace $E_x^u, E_x^s$ such that
	\begin{eqnarray}
	(D_xf)E_x^u &=& E_{f(x)}^u,\\
	(D_xf)E_x^s &=& E_{f(x)}^s,\\
	\bm{\xi}\in E_x^u \Longrightarrow\|(D_xf^n)E_x^u\| &\geq& K \lambda^n \|\bm{\xi}\|, \label{stretching}\\ 
	\bm{\xi}\in E_x^s \Longrightarrow\|(D_xf^n)E_x^s\| &\leq&  K \lambda^n \|\bm{\xi}\|, \label{shrinking}
	\end{eqnarray}
	where $K>0, 0<\lambda<1$ are determined by $x$ not by $\bm{\xi}$ or $n$.
\end{definition}

\begin{lemma}
	When the condition (\ref{Condition}) is satisfied,
	the map $\tilde{T}_\varepsilon$ on $M$ is an Anosov diffeomorphism.
\end{lemma}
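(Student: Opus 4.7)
The plan is to prove Anosov hyperbolicity of $\tilde T_\varepsilon$ by an invariant cone field argument built directly on the Jacobian (\ref{Jacobi}). I will proceed in three stages: pointwise hyperbolicity of $J(n)$, construction of invariant cones that contract onto the stable and unstable lines, and quantitative verification of the expansion/contraction inequalities (\ref{stretching})--(\ref{shrinking}).

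First, set $a_n := 2 - 2\pi\varepsilon/\cos^2(\pi q_n)$. Because $\det J(n) = 1$, the eigenvalues $\lambda_\pm$ of $J(n)$ satisfy $\lambda_+\lambda_- = 1$ and $\lambda_+ + \lambda_- = a_n$, so they are real, distinct, and satisfy $|\lambda_+|>1>|\lambda_-|$ whenever $|a_n|>2$. Under condition (\ref{Condition}) this bound holds uniformly: for $\varepsilon<0$ one has $a_n \geq 2 - 2\pi\varepsilon > 2$, and for $\varepsilon>2/\pi$ one has $a_n \leq 2 - 2\pi\varepsilon < -2$.

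Next, introduce at every point the cones
\[ C^u = \{(u,v)^T : |v| \geq |u|\}, \qquad C^s = \{(u,v)^T : |u| \geq |v|\}. \]
Using $J(n)(u,v)^T = (v,\,-u + a_n v)^T$ together with $|{-u+a_n v}| \geq (|a_n|-1)|v|$ on $C^u$, a direct check shows $J(n)(C^u) \subset \mathrm{int}\, C^u \cup \{0\}$; the analogous computation for $J(n)^{-1} = \bigl(\begin{smallmatrix} a_n & -1 \\ 1 & 0\end{smallmatrix}\bigr)$ gives $J(n)^{-1}(C^s) \subset \mathrm{int}\, C^s \cup \{0\}$. The invariant splitting is then defined by the nested intersections
\[ E^u_x := \bigcap_{n \geq 0}(D_{\tilde T_\varepsilon^{-n}x}\tilde T_\varepsilon^n)\,C^u, \qquad E^s_x := \bigcap_{n \geq 0}(D_{\tilde T_\varepsilon^n x}\tilde T_\varepsilon^{-n})\,C^s, \]
each one-dimensional because the induced projective map $s \mapsto a_n - 1/s$ on $\{|s|\geq 1\}$ is a strict contraction; invariance $(D_x \tilde T_\varepsilon) E^{u/s}_x = E^{u/s}_{\tilde T_\varepsilon(x)}$ is immediate from the definition.

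For the growth estimates I pass to the adapted norm $\|(u,v)\|^* := \max(|u|,|v|)$, which is equivalent to the Euclidean norm up to a factor $\sqrt{2}$. On $C^u$ one has $\|J(n)\xi\|^* \geq (|a_n|-1)\|\xi\|^*$, and iteration along an orbit yields $\|(D_x \tilde T_\varepsilon^n)\xi\|^* \geq \mu^n\|\xi\|^*$ with $\mu := \inf_q(|a(q)|-1) > 1$ (explicitly $\mu = 1-2\pi\varepsilon$ if $\varepsilon<0$ and $\mu = 2\pi\varepsilon-3$ if $\varepsilon>2/\pi$). The symmetric estimate for $J^{-1}$ on $C^s$ gives contraction by $\mu^{-1} < 1$, and translating back to the Euclidean norm absorbs an $x$-dependent factor into the constant $K(x)$. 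The main obstacle is technical rather than conceptual: $a(q)$ diverges as $q \to \pm 1/2$, but excising $A$ keeps orbits on $M$, and the divergence only sharpens the cone contraction; the care required is only in verifying that the nested cone intersection is indeed a single line at every $x$ and in controlling the adapted-to-Euclidean norm conversion, both of which follow from $\mu > 1$.
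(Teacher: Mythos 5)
Your proof is correct and rests on the same underlying idea as the paper's --- an invariant cone field for the Jacobian $J(n)=\bigl(\begin{smallmatrix}0&1\\-1&a_n\end{smallmatrix}\bigr)$, followed by a nested-intersection construction of $E^u,E^s$ and a quantitative expansion bound --- but the execution differs in ways worth noting. The paper works with the point-dependent cones $L^{\pm}(x_n)$ defined by the property $\|J(n)\bm{a}\|\gtrless\|\bm{a}\|$, characterizes them by the slope inequality $a_1/a_2 \gtrless 1-\pi\varepsilon/\cos^2(\pi q_n)$, and then estimates the stretching rate $\sigma(x_n,\bm{a})=\|J(n)\bm{a}\|^2/\|\bm{a}\|^2$ in the Euclidean norm via a trigonometric parametrization $\cot\phi_n=a_1/a_2$, splitting into the cases $\varepsilon>2/\pi$ and $\varepsilon<0$. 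You instead use the constant cones $|v|\geq|u|$ and $|u|\geq|v|$ together with the adapted norm $\max(|u|,|v|)$, which reduces both the cone invariance and the expansion estimate to the single inequality $|-u+a_nv|\geq(|a_n|-1)|v|$ with the uniform bound $\inf|a_n|>2$; this yields the explicit rates $\mu=1-2\pi\varepsilon$ and $\mu=2\pi\varepsilon-3$ in one stroke, treats both parameter ranges symmetrically, and correctly orients the inequalities (expansion by $\mu^n$ with $\mu>1$ on $E^u$, contraction by $\mu^{-n}$ on $E^s$), whereas the paper's write-up of the stretching bound is harder to follow and its constant $\lambda>1$ sits awkwardly against its own definition requiring $0<\lambda<1$. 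Your observation that the divergence of $a(q)$ near $q=\pm 1/2$ only strengthens the hyperbolicity is also the right way to dispose of the noncompactness of $M$. One small imprecision: the projective map $s\mapsto a_n-1/s$ has derivative $1/s^2$, which equals $1$ on the boundary $|s|=1$ of your cone, so it is not a strict contraction on all of $\{|s|\geq 1\}$; but since one application sends $\{|s|\geq 1\}$ into $\{|s|\geq|a_n|-1\geq\mu>1\}$, where the derivative is at most $\mu^{-2}<1$, the composed maps do contract uniformly and the nested intersection is indeed a single line. With that sentence added, the argument is complete.
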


According to \cite{Zaslavsky,Arnold}, for any normalized two dimensional vector 
\begin{eqnarray*}
\mathbf{a}= (a_1(n), a_2(n)),
\end{eqnarray*}
consider cones $L^+$ and $L^-$ such that
\begin{eqnarray*}
L^+  &=& \{(a_1(n), a_2(n)); \| J(n)\bm{a}\| > \|\bm{a}\|\},\\
L^-  &=& \{(a_1(n), a_2(n)); \|J(n)\bm{a}\| < \|\bm{a}\|\}.
\end{eqnarray*}

The goal is to prove
\begin{eqnarray}
J(n)\bm{a} \in L^+(\tilde{T}_\varepsilon x_n), {}^\forall\bm{a} \in L^+(x_n), \label{Mixing condition1}\\
  J(n-1)^{-1}\bm{a} \in L^-(\tilde{T}_\varepsilon^{-1}x_n), {}^\forall \bm{a} \in L^-(x_n). \label{Mixing condition2}
\end{eqnarray}

\begin{proof}
	In the case of ($\ref{Mixing condition1}$), 
	the condition that $\bm{a} \in L^+(x_n)$ is expressed by
	\begin{eqnarray}
	\bm{a} \in L^+(x_n)
	\Leftrightarrow \left\lbrace 
	\begin{array}{cc}
	\frac{a_1(n)}{a_2(n)} > 1-\frac{\pi\varepsilon}{\cos^2(\pi q_n)},& \varepsilon > \frac{2}{\pi}\\
	\frac{a_1(n)}{a_2(n)} < 1-\frac{\pi\varepsilon}{\cos^2(\pi q_n)},& \varepsilon < 0.
	\end{array}
	\right. \label{L^+}
	\end{eqnarray}
	Let define $\bm{a}' \equiv J(n)\bm{a} = (a_1(n+1), a_2(n+1))$. The condition that $\bm{a}' \in L^+(x_{n+1})$ is expressed by
	\begin{eqnarray}
	& &J(n)\bm{a} \in L^+(\tilde{T}_\varepsilon x_n),\nonumber\\
	&\Leftrightarrow& \left\lbrace
	\begin{array}{c}
	\frac{a_1(n+1)}{a_2(n+1)} > 1-\frac{\pi\varepsilon}{\cos^2(\pi q_{n+1})},~\varepsilon > \frac{2}{\pi}\\
	\frac{a_1(n+1)}{a_2(n+1)} < 1-\frac{\pi\varepsilon}{\cos^2(\pi q_{n+1})},~\varepsilon < 0.
	\end{array}
	\right. \label{L^+2}
	\end{eqnarray}

	By substituting $a_1(n+1)=a_2(n)$ and $a_2(n+1)= -a_1(n) +2\left(1-\frac{\pi\varepsilon}{\cos^2(\pi q_n)}\right)a_2(n)$ and
	considering
	\begin{eqnarray*}
	\frac{a_1(n+1)}{a_2(n+1)}&=&\frac{a_2(n)}{-a_1(n) +2\left(1-\frac{\pi\varepsilon}{\cos^2(\pi q_n)}\right)a_2(n)},\\
	&=& \frac{1}{-\frac{a_1(n)}{a_2(n)} +2\left(1-\frac{\pi\varepsilon}{\cos^2(\pi q_n)}\right)}
	\end{eqnarray*}
	Then, when the condition (\ref{L^+}) is satisfied, it holds that
	\begin{eqnarray}
	1-\frac{\pi\varepsilon}{\cos^2(\pi q_{k})}<\frac{a_1(k)}{a_2(k)}<0,~{}^\forall k\geq n+1, \varepsilon>\frac{2}{\pi},\\
	0<\frac{a_1(k)}{a_2(k)}<1-\frac{\pi\varepsilon}{\cos^2(\pi q_{k})},~{}^\forall k\geq n+1, \varepsilon<0
	\end{eqnarray}
	Therefore the condition (\ref{Mixing condition1}) holds. Then consider a subset $LL^+(x_n)$ of $L^+(x_n)$ defined by
	\begin{eqnarray*}
	LL^+(x_n) 	=
	 \left\lbrace (a_1(n), a_2(n)) \Bigg|  1-\frac{\pi\varepsilon}{\cos^2(\pi q_{n})}<\frac{a_1(n)}{a_2(n)}<0\right\rbrace 
	\end{eqnarray*}
	when $\varepsilon> \frac{2}{\pi}$ and
	\begin{eqnarray*}
	LL^+(x_n) 	=
	\left\lbrace (a_1(n), a_2(n)) \Bigg|  0<\frac{a_1(n)}{a_2(n)}<1-\frac{\pi\varepsilon}{\cos^2(\pi q_{n})}\right\rbrace 
	\end{eqnarray*}
	when $\varepsilon<0$. It holds that
	\begin{eqnarray}
	J(n)\bm{a} \in LL^+(\tilde{T}_\varepsilon x_n), {}^\forall\bm{a} \in LL^+(x_n).
	\end{eqnarray}
	
	In the case of (\ref{Mixing condition2}), it holds that
	\begin{eqnarray}
	\bm{a} \in L^-(x_n) \Leftrightarrow
	\left\lbrace 
	\begin{array}{cc}
		\frac{a_1}{a_2} > 1-\frac{\pi\varepsilon}{\cos^2(\pi q_n)},& \varepsilon > \frac{2}{\pi},\\
		\frac{a_1}{a_2} < 1-\frac{\pi\varepsilon}{\cos^2(\pi q_n)},& \varepsilon < 0.
	\end{array}
	\right. \label{L^-}
	\end{eqnarray}
	Then, 
	\begin{eqnarray}
	& &J(n-1)^{-1}\bm{a} \in L^-(\tilde{T}_\varepsilon^{-1}x_n),\nonumber\\
	&\Leftrightarrow& \left\lbrace 
	\begin{array}{cc}
	\frac{a_1(n-1)}{a_2(n-1)} > 1-\frac{\pi\varepsilon}{\cos^2(\pi q_{n-1})},& \varepsilon > \frac{2}{\pi}\\
	\frac{a_1(n-1)}{a_2(n-1)} < 1-\frac{\pi\varepsilon}{\cos^2(\pi q_{n-1})},& \varepsilon < 0.
	\end{array}
	\right. \label{L^-2}
	\end{eqnarray}
	By substituting $a_1(n-1) = 2\left(1-\frac{\pi\varepsilon}{\cos^2(\pi q_{n-1})}\right)a_1(n)-a_2(n)$ and 
	$a_2(n-1) = a_1(n)$, one can see condition (\ref{L^-2}) holds when condition (\ref{L^-}) holds.
	
	Then, by considering a orbit $\{x_n\}_{n=-\infty}^\infty$ there exists a cone $L^-(x_\infty)$ and $l^-(x_n)$ such that
	\begin{eqnarray*}
		L^-(x_{n}) &\supset& J^{-1}(n)L^-(x_{n+1}),\\
	&\supset&  J^{-1}(n)J^{-1}(n+1)L^-(x_{n+2}),\\
	&\cdots& \\
	&\supset& l^-(x_n),\\
	l^-(x_n) &\equiv&  \left(\prod_{k=n}^{\infty}J^{-1}(k)\right)L^-(x_\infty),
	\end{eqnarray*}
	
	Then, one can choose any eigenvector spaces $E_x^u$ and $E_x^s$ from $LL^+(x_n)$ and $l^-(x)$ each other by
	\begin{eqnarray}
	E_{x_n}^u \subset LL^+(x_n),	E_{x_n}^s \subset l^-(x_n).
	\end{eqnarray}
	 It is established that
	$\left( D_{x_n}\tilde{T}_\varepsilon\right)  E_{x_n}^u \subset LL^+(x_{n+1})$ and 
	$\left( D_{x_n}\tilde{T}_\varepsilon\right)  E_{x_n}^s \subset L^-(x_{n+1}).$ 
	Then, one can define $E_{x_{n+1}}^u$, $E_{x_{n+1}}^s$  by
	\begin{eqnarray}
	E_{x_{n+1}}^u \equiv \left( D_{x_n}\tilde{T}_\varepsilon\right)  E_{x_n}^u, 
	E_{x_{n+1}}^s \equiv \left( D_{x_n}\tilde{T}_\varepsilon\right)  E_{x_n}^s.
	\end{eqnarray}
	$E_{x_n}^u$ and $E_{x_n}^s$ are independent by theire definition so that, 
	it holds that
	\begin{eqnarray*}
	T_{x_n}M = E_{x_n}^u\oplus E_{x_n}^s.
	\end{eqnarray*}
	Next let's determine $K, \lambda$. Stretching rate $\sigma$ is defined by
	{\small
	\begin{eqnarray*}
	& &\sigma(x_n, \bm{a}) = \frac{\|J(n)\bm{a}\|^2}{\|\bm{a}\|^2},\\
	&=&\frac{a_1^2+a_2^2 -4a_1a_2\left(1-\frac{\pi\varepsilon}{\cos^2(\pi q_n)}\right)+4\left(1-\frac{\pi\varepsilon}{\cos^2(\pi q_n)}\right)^2a_2^2}{a_1^2+a_2^2},\\
	&=& 1 -4\frac{a_1a_2}{a_1^2+a_2^2}\left(1-\frac{\pi\varepsilon}{\cos^2(\pi q_n)}\right)+4\left(1-\frac{\pi\varepsilon}{\cos^2(\pi q_n)}\right)^2\frac{a_2^2}{a_1^2+a_2^2},\\
	&=& 1+4\left(1-\frac{\pi\varepsilon}{\cos^2(\pi q_n)}\right)\left\lbrace \left(1-\frac{\pi\varepsilon}{\cos^2(\pi q_n)}\right)\sin^2\phi-\sin\phi\cos\phi\right\rbrace ,
	\end{eqnarray*}
}
	where $\sin^2\phi \equiv \frac{a_2^2}{a_1^2+a_2^2}, \sin\phi\cos\phi \equiv \frac{a_1a_2}{a_1^2+a_2^2},~-\pi< \phi\leq\pi$. 
	Then, it holds that if $\bm{a}\in LL^+(x_n)$,
	{\small
	\begin{eqnarray}
	\left(1-\frac{\pi\varepsilon}{\cos^2(\pi q_n)}\right)\sin^2\phi-\sin\phi\cos\phi
	\left\lbrace 
	\begin{array}{cc}
	<0, &\varepsilon>\frac{2}{\pi},\\
	>0, &\varepsilon<0,
	\end{array}\right.
	\end{eqnarray}
}
	Let define $\alpha_n \equiv \left(1-\frac{\pi\varepsilon}{\cos^2(\pi q_n)}\right)$, and
	\begin{eqnarray*}
	g(\phi) &\equiv& \alpha_n \sin^2\phi-\sin\phi\cos\phi,\\
	g'(\phi) &=& \alpha_n \sin(2\phi) -\cos(2\phi),\\
	&=& \sin(2\phi)\left(\alpha_n - \cot(2\phi)\right).
	\end{eqnarray*}
	
	(I) Case of $\varepsilon> \frac{2}{\pi}$,
	
	Considering 
	$\alpha_n<\frac{a_1(n)}{a_2(n)}= \cot\phi_n<0$,
	the range of $\phi_n$ is expressed by
	\begin{eqnarray}
	\frac{\pi}{2} < \phi_n < \psi_n, -\frac{\pi}{2} < \phi_n < \psi_n-\pi,
	\end{eqnarray}
	where $\cot\psi_n = \alpha_n$.  Since $g'(\phi)$ is positive in this range,
	It becomes 
	\begin{eqnarray*}
	g(\phi_n) &<& g(\psi_n) = \frac{2\alpha_n}{1+\alpha_n^2}<0,\\
	\sigma(x_n, \bm{a}) &<& 1 +4\alpha_n \cdot \frac{2\alpha_n}{1+\alpha_n^2}
	<1+ \frac{16(1-\pi\varepsilon)^2}{1+(1-\pi\varepsilon)^2}
	\end{eqnarray*}
	Then by defining $K = 1$ and $\lambda = \sqrt{1+ \frac{16(1-\pi\varepsilon)^2}{1+(1-\pi\varepsilon)^2}}$,
	condition (\ref{stretching}) is satisfied.

	(II) Case of $\varepsilon <0$,
	
	Considering $0< \cot\phi_n < \alpha_n$,
	the range of $\phi_n$ is expressed by
	\begin{eqnarray}
	\psi_n < \phi_n < \frac{\pi}{2}, \psi_n -\pi < \phi_n < -\frac{\pi}{2},
	\end{eqnarray}
	Since $g'(\phi)$ is positive in this range,
	It becomes 
	\begin{eqnarray*}
		g(\phi_n) &>& g(\psi_n) = \frac{2\alpha_n}{1+\alpha_n^2}>0,\\
		\sigma(x_n, \bm{a}) &>& 1 +4\alpha_n \cdot \frac{2\alpha_n}{1+\alpha_n^2}
		>1+ \frac{16(1-\pi\varepsilon)^2}{1+(1-\pi\varepsilon)^2}
	\end{eqnarray*}
	Then by defining $K = 1$ and $\lambda = \sqrt{1+ \frac{16(1-\pi\varepsilon)^2}{1+(1-\pi\varepsilon)^2}}$,
	condition (\ref{stretching}) is satisfied.
	
	For a symplectic map, since the shrinking rate is a inverse of stretching rate,
	condition (\ref{shrinking}) is also holds.
	Therefore, the map $\tilde{T}_\varepsilon$ on $M$ is an Anosov diffeomorphism.
\end{proof}

	According to \cite{Arnold}, Anosov diffeomorphism is K-system. Therefore the theorem below holds.
\begin{theorem}
	When the condition (\ref{Condition}) is satisfied,
	dynamical system $(M, \tilde{T}_\varepsilon, \frac{1}{4}dpdq)$ has the mixing property.
\end{theorem}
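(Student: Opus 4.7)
The strategy is to reduce the mixing statement to the classical fact that a smooth Anosov diffeomorphism which preserves a probability measure absolutely continuous with respect to Lebesgue is a Kolmogorov automorphism, and in particular is strongly mixing; this is exactly the result from \cite{Arnold} that the authors invoke in the sentence preceding the theorem.

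First I would collect the two ingredients already established in the paper. Theorem~1 shows that $\rho(p,q)\equiv 1$ is an invariant density of $\tilde{T}_\varepsilon$ on $I\times I$, so the normalised Lebesgue measure $\mu = \tfrac{1}{4}\,dp\,dq$ is a $\tilde{T}_\varepsilon$-invariant Borel probability measure, automatically absolutely continuous with respect to Lebesgue. The previous Lemma shows that, under the local-instability condition (\ref{Condition}), the restriction of $\tilde{T}_\varepsilon$ to $M=(I\times I)\setminus A$ is a (uniformly) hyperbolic $C^{\infty}$ diffeomorphism with an explicit splitting $T_xM=E_x^u\oplus E_x^s$ and explicit expansion/contraction constants $K=1$, $\lambda=\sqrt{1+\tfrac{16(1-\pi\varepsilon)^{2}}{1+(1-\pi\varepsilon)^{2}}}$.

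Second I would invoke the Anosov--Sinai theorem in the form stated in \cite{Arnold}: every volume-preserving Anosov diffeomorphism is a K-system, and every K-system is mixing (of all orders). Feeding in the Anosov property from the Lemma and the smooth invariant measure from Theorem~1 gives the K-property of $(M,\tilde{T}_\varepsilon,\mu)$, and hence the desired mixing statement.

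The one delicate point, which I expect to be the main (purely technical) obstacle, is that $M$ is not a closed manifold: we had to delete the invariant set $A$ of points whose forward or backward orbit ever hits $q=\pm 1/2$, where $\tan(\pi q)$ blows up and the Anosov estimates of the Lemma are not defined. This is harmless measure-theoretically because $A$ is a countable union of analytic curves, hence $\mu(A)=0$ and $\tilde{T}_\varepsilon^{-1}A=A$, so $\mu$ descends to a well-defined $\tilde{T}_\varepsilon$-invariant probability on $M$; and the hyperbolic bounds of the Lemma are uniform on $M$, so the Anosov--Sinai conclusion applies verbatim on the full-measure invariant set $M$. Collecting these points yields that $(M,\tilde{T}_\varepsilon,\tfrac14 dp\,dq)$ is mixing.
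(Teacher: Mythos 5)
Your proposal follows exactly the paper's own (one-sentence) argument: cite the preceding Lemma for the Anosov property and invoke the result from the Arnold--Avez reference that a volume-preserving Anosov diffeomorphism is a K-system, hence mixing. The only difference is that you are more careful than the paper, which silently ignores the fact that $M=(I\times I)\setminus A$ is not a closed manifold; your observation that $A$ is an invariant null set is a genuine (and needed) addition, though a fully rigorous treatment would still have to justify applying the compact-manifold K-system theorem to this non-compact $M$, where the derivative $\partial q_{n+1}/\partial q_n$ is unbounded.
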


According to \cite{Young}, if $(M, \tilde{T}_\varepsilon)$ is a Anosov diffeomorphism,
$M$ is also a Axiom A attractor. 
Since Lebesgue measure $dpdq$ is preserved by $\tilde{T}_\varepsilon$ on $M$, and 
dynamical system $(M, \tilde{T}_\varepsilon, dpdq)$ is ergodic for $\varepsilon<0, \frac{2}{\pi}< \varepsilon$.
Therefore, Lebesgue measure is a unique SRB measure.

\paragraph{Cauchy distribution}
Action variable $I_1(n)$ can be expressed by
\begin{eqnarray}
I_1(n) = I_1(0) -\varepsilon \sum_{k=1}^{n}\tan \left[\pi\left\lbrace \theta_1(k)-\theta_2(k)\right\rbrace \right].
\end{eqnarray}
A probability variable 
$s_k=\{-\varepsilon\tan\left[\pi\left\lbrace \theta_1(k)-\theta_2(k)\right\rbrace \right]\}$
is according to the Cauchy distribution whose scale parameter is $|\varepsilon|$ 
for $\varepsilon<0, \frac{2}{\pi}<\varepsilon$. That is 
$\{s_k\}$ is stationary and strongly mixing. Then according to \cite{Ibragimov}, 
$\{I_1(n)\}$ is in accordance with a stable distribution.
Figure \ref{Fig: Cauchy Action} shows log-log plot of the distribution $f(x)$ of $\{I_1(100)\}$ at $\varepsilon = 0.65$
	obtained by numerical experiment and fitted function $g(x)$.
	The number of initial points $N$ is $N=10^6$. 
	When $\{I_1(100)\}$ are in accordance with Cauchy distribution whose scale parameter is $a$, 
	the probability variables $\{x=(I_1(100)-\mu)/\sqrt{\sigma}\}$ are considered to be in accordance with $g(x)$ defined by
	\begin{eqnarray}
	g(x) = \frac{1}{\pi} \frac{a \sqrt{\sigma}}{(\sqrt{\sigma}x)^2 + a^2},
	\end{eqnarray}	
	where $\mu$ and $\sqrt{\sigma}$ are an average and a variance respectively obtained from finite number of probability variables $\{I_1(100)\}$. 
	By fitting $g(x)$ to the data using least squares method, fitted parameter $\hat{a}$ is obtained as
	$\hat{a} = 68.01 \simeq 0.65 \times 100$. This result shows $I_1$ is in accordance with Cauchy distribution, so that
	the true average and variance of $I_1$ do not exist.

\begin{figure}
	\centering
	\includegraphics[width=.9\columnwidth]{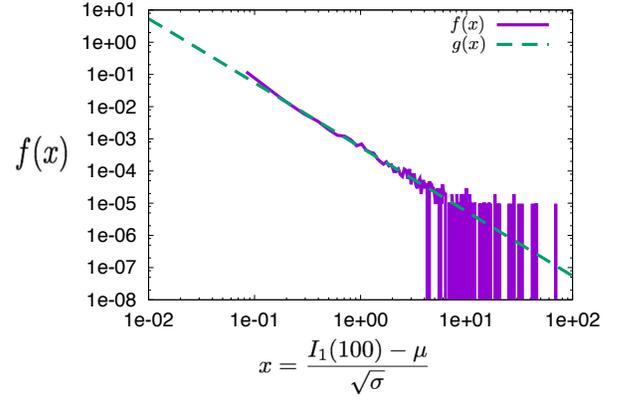}
	
	\caption{The density function $f(x)$ of $x=I_1(100)$. Initial points $\{(p_0, q_0)\}$ are uniformly distributed and
		$I_1(100)$ is obtained by $I_1(100) = -\varepsilon \sum_{i=0}^{99}\tan(\pi q_i)$. 
		$g(x)$ and parameter $\hat{a}$ obtained by least squares method is $\hat{a} = 68.01 \simeq 100\times 0.65$.
		$\mu = 2.64, \sqrt{\sigma}= 3.93\times 10^4$}
	
	\label{Fig: Cauchy Action}
\end{figure} 

\paragraph{The fluctuation of Energy}
In not integrable system, there is a trade-off relation between the conservation of Energy and the that of symplecticity \cite{Ge}.
Then, this symplectic map, the energy cannot be conserved and fluctuates. Especially in this map since the distribution with 
$I_1$ and $I_2$ are in accordance with Cauchy distribution, their variances $\sigma(I_{1,2})$ diverge. Then the kinetic energy and a total one
also diverge. The Figure \ref{Fig: Fluctuation} shows the behavior of the fluctuation of energy.

\begin{figure}[h]
	\centering
	\hspace*{1cm}
	\includegraphics[width=8cm]{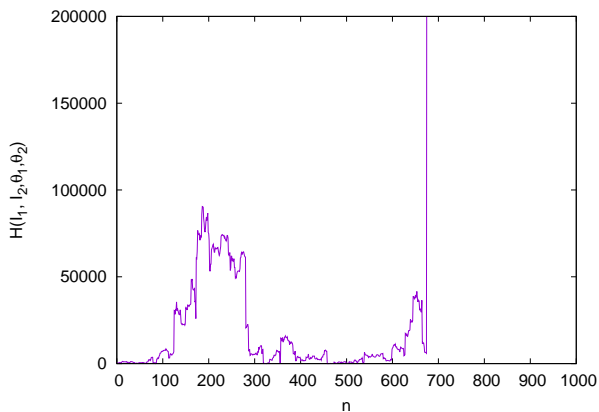}
	
	\vspace*{1cm}
	\caption{The time behavior of total Energy $H(I_1, I_2, \theta_1, \theta_2)$. It fluctuates and its variance diverges.
		The initial condition is $(I_1, I_2, \theta_1, \theta_2)=(1.41, 1.51+\frac{\pi}{2}, 0.1, -1.4)$.}
	\label{Fig: Fluctuation}
\end{figure}

\paragraph{Superdiffusion}
 Considering that $I_1$ is in accordance with Stable distribution and referring the Figure \ref{Fig: Cauchy Action},
 superdiffusion occurs. 
Figure \ref{Fig: MSD} shows the log-log plot with times evolution of 
Mean Square Displacement (MSD) for $I_1$ at $\varepsilon = 0.66> \frac{2}{\pi}$ 
and there occurs superdiffusion.
The inclination of the fitting line is about $1.80>1$.

\begin{figure}[h]
	\centering
	
\hspace*{1cm}
	\includegraphics[width = 8cm]{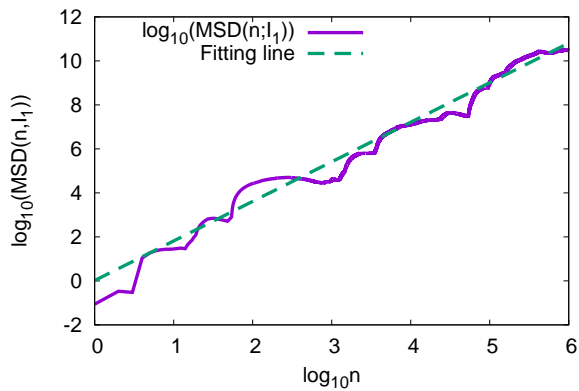}
	
	\vspace*{1cm}
	\caption{The mean square displacement (MSD) of map $T$ for $\varepsilon=0.66$  (solid line) and its fitting line (broken line). 
		The inclination of fitting line is about $1.80$.
		Super diffusion occurs.}
	\label{Fig: MSD}
\end{figure}

\paragraph{Lyapunov exponent}
Since the Lebesgue measure is unique SRB mesure for $\varepsilon<0, \frac{2}{\pi}<\varepsilon$,
KS entropy $h(\tilde{T}_\varepsilon)$ is expressed as
\begin{eqnarray}
h(\tilde{T}_\varepsilon) &=& \int \int \log |\det \left( D\tilde{T}_\varepsilon|_{E^u}\right) |dpdq,\\
&=& \int \int \log |\gamma |dpdq,
\end{eqnarray}
where $\gamma$ is a eigenvalue of Jacobian $J(x)$ whose absolute value is larger than unity.
\begin{eqnarray}
\gamma &=& \left\lbrace 
\begin{array}{c}
\gamma_-,~\mbox{when}~\varepsilon > \frac{2}{\pi},\\
\gamma_+,~\mbox{when}~\varepsilon < 0,
\end{array}
\right.\\
\gamma_{\pm} &=& 1- \frac{\pi\varepsilon}{\cos^2(\pi q)}\pm \sqrt{\left(1-\frac{\pi\varepsilon}{\cos^2(\pi q)}\right)^2-1}.
\end{eqnarray}
$h(\tilde{T}_\varepsilon)$ is equivalent to a positive Lyapunov exponent.
Then, in $|\varepsilon| \gg 1$, the Lyapunov exponent can be expressed by
\begin{eqnarray}
\lambda(\varepsilon) = \log \left|2\left(1-2\varepsilon \pm \sqrt{4\varepsilon(\varepsilon-1)}\right)\right|.
\end{eqnarray}
Figure \ref{Fig: Lyapunov exponent} shows the comparison between the numerical result and analytical formula of Lyapunov exponent.
The numerical result is consistent with analytic formula in $|\varepsilon| \gg 1.$
\begin{figure}[!h]
	\centering
	\vspace*{-1cm}
\hspace*{1cm}
	\includegraphics[width = 8cm]{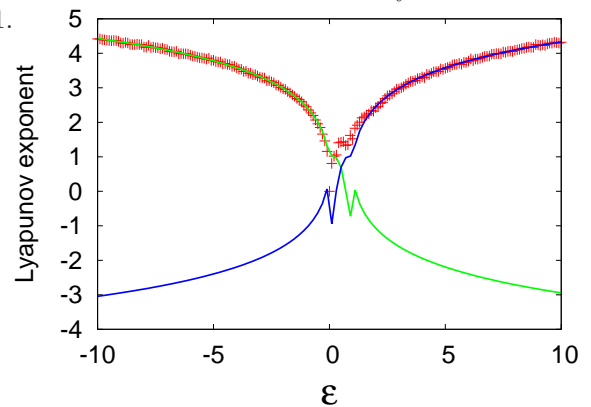}
	
	\vspace*{1cm}
	\caption{The numerical result (red cross) of Lyapunov exponent and analytical formula (solid line).}
	\label{Fig: Lyapunov exponent}
\end{figure}
\paragraph{Conclusion}
We proposed canonical deterministic model with superdiffusion. 
We showed the condition of superdiffusion. The analytical formula of invariant density is obtained and mixing property
is shown when this condition satisfied. We also calculate Lyapunov exponent and it is consistent with a numerical 
experiment. The action variables are in accordance with the Cauchy distribution whose scale parameter is $|\varepsilon|$ in 
$\varepsilon<0, \frac{2}{\pi}< \varepsilon$. Therefore, this model has energy fluctuation divergence.

\end{document}